\pdfoutput=0
\documentclass[conference]{IEEEtran}
\usepackage{amsfonts}
\usepackage{graphicx}
\usepackage{color}
\usepackage{amsmath,amsfonts,amssymb,amsthm,epsfig,epstopdf,url,array}
\usepackage{url,textcomp}
\usepackage{authblk}
\usepackage{cite}

\newtheorem{theorem}{Theorem}
\newtheorem{lemma}{Lemma}

\begin{document}
\title{Optimal Power Allocation for HARQ Schemes over Time-Correlated Nakagami-m Fading Channels}
\specialpapernotice{(Invited Paper)}
\author{Zheng~Shi$^1$,
        Shaodan~Ma$^1$,
        Fen Hou$^1$,
        Kam-Weng~Tam$^1$,
        and Yik-Chung Wu$^2$
        }
\affil{$^1$Department of Electrical and Computer Engineering, University of Macau, Macau \\ $^2$Department of Electrical and Electronic Engineering, The University of Hong Kong, Hong Kong}
\maketitle
\begin{abstract}
This paper investigates the problem of power allocation for hybrid automatic repeat request (HARQ) schemes over time-correlated Nakagami-m fading channels under outage constraint. The presence of time correlation complicates the power allocation problem due to the involvement of multiple correlated fading channels. Under a general time-correlated Nakagami-m fading channel with exponential correlation, outage probabilities for three widely adopted HARQ schemes, including Type I HARQ, HARQ with chase combining (HARQ-CC) and HARQ with incremental redundancy (HARQ-IR), are first derived. With these results, power allocation schemes are proposed to minimize the average total transmission power with guaranteed outage performance. Simulation results demonstrate the accuracy of our outage analysis and the effectiveness of our proposed power allocation schemes. It is shown that our proposed power allocation schemes can achieve significant power savings when compared with fixed power allocation. Moreover, under practical low outage constraint, the power efficiency is further improved when the time correlation is reduced and/or the fading order is increased.
\end{abstract}
\begin{IEEEkeywords}
Time-correlated Nakagami-m fading, hybrid automatic repeat request, power allocation.
\end{IEEEkeywords}
\IEEEpeerreviewmaketitle
\hyphenation{HARQ}
\section{Introduction}\label{sec:int}
Hybrid automatic repeat request (HARQ) is a powerful transmission protocol to combat the detrimental effects of channel fading and noise due to its combination of automatic repeat request and forward error correction. Generally, there are three types of HARQ schemes, including Type I HARQ, HARQ with chase combining (HARQ-CC) and HARQ with incremental redundancy (HARQ-IR). For Type I HARQ, the erroneously received packets are discarded and only the most recently received packet is used for decoding. Since the failed packet may still contain some useful information, it can be exploited for performance enhancement and the other two HARQ schemes are thus designed for this purpose. They combine the erroneously received packets with subsequently received packets for joint decoding to improve the performance. Their difference lies in whether the same set of coded bits are transmitted in each HARQ round. Specifically, for HARQ-CC, the same coded sequence is repetitively transmitted in each HARQ round and maximal-ratio-combining (MRC) is employed to combine all the received packets to recover the message, whereas HARQ-IR transmits different sets of coded bits in each retransmission and code combining is adopted for joint decoding.

Power allocation for HARQ schemes has attracted considerable research attention recently. Most of prior works consider either quasi-static fading channels \cite{su2011optimal,makki2013green,wang2014optimum} or fast fading channels \cite{jinho2013energy,chaitanya2013optimal,chaitanya2015energy}. To be specific, in \cite{su2011optimal}, an optimal power allocation scheme is proposed to minimize the average total transmission power of HARQ-CC over quasi-static fading channels, where the channel response remains constant during multiple HARQ rounds. Similar to \cite{su2011optimal}, outage-limited power allocation is investigated for HARQ-CC and HARQ-IR schemes in both continuous and bursting communication systems in \cite{makki2013green}. Considering the same quasi-static fading channels, power allocation is investigated in \cite{wang2014optimum}. A backward sequential calculation method is developed to find the optimum power allocation. On the other hand, some of prior literature considers fast fading channels, where channel responses vary independently among multiple transmissions. For example, in \cite{jinho2013energy}, power allocation is discussed for HARQ-IR enabled distributed cooperative beamforming system, where the source and the relay have fixed transmission power in each HARQ round. Another power allocation scheme is proposed for HARQ-CC over independent Rayleigh fading channels in \cite{chaitanya2013optimal}. By reformulating the power allocation problem as a geometric programming problem and using dominant-term approximation, the optimal solution is found efficiently. The same approach is further extended to the power allocation for HARQ-enabled incremental MIMO systems in \cite{chaitanya2015energy}.

Apart from quasi-static and fast fading channels, another frequently experienced channel is time-correlated fading channel \cite{kim2011optimal,jin2011optimal}, which usually occurs when the transceiver has low-to-medium mobility. Under time correlated fading channels, power allocation becomes much more challenging due to the involvement of multiple correlated random variables and there are few solutions if any in the literature. In this paper, we investigate power allocation for HARQ schemes over time-correlated Nakagami-m fading channels. A general multivariate Nakagami-m distribution with exponential correlation is adopted to model time-correlated fading channels. The outage probabilities and their asymptotic expressions are first derived for three HARQ schemes, i.e., Type I HARQ, HARQ-CC and HARQ-IR. These analytical results then enable the optimal power allocation to minimize the average total transmission power with guaranteed outage performance. Closed-form optimal solutions are found based on the asymptotic outage expressions. Finally, these theoretical results are validated through  simulations. It is found that our proposed power allocation schemes can achieve significant power savings when compared with fixed power allocation. Moreover, under practical low outage constraint, the power efficiency is further improved with the reduction of time correlation and the increase of fading order.

The remainder of this paper is organized as follows. In Section \ref{sec:mod}, system model is given and outage analysis is conducted for three HARQ schemes. Section \ref{sec:op_cl} generalizes the problem of outage-limited power allocation for three HARQ schemes, and optimal solutions are proposed in closed forms. In Section \ref{sec:numer}, numerical results are presented and discussed to demonstrate the efficiency of our proposed power allocation schemes. Finally, Section \ref{sec:con} concludes this paper.

\section{System Model and Outage Analysis}\label{sec:mod}
A point-to-point HARQ enabled system operating over time-correlated Nakagami-m block-fading channels is considered in this paper. Following the HARQ protocol, $L$ maximal transmissions are allowed for each single message. The received signal $y_l$ in the $l$th HARQ round is written as
\begin{equation}\label{eqn:channel_mod}
  y_l = \sqrt{P_l} h_l x_l + \eta_l,\quad 0 \le l \le L,
\end{equation}
where $x_l$ denotes the transmitted signal with unit power in the $l$th HARQ round, $P_l$ refers to the transmit power in the $l$th HARQ round, $\eta_l$ represents the complex Gaussian white noise with zero mean and unit variance, i.e., $\eta_l \sim {\mathcal{CN}}(0,1)$, and $h_l$ is the channel coefficient in the $l$th HARQ round. Unlike prior literature, time correlated Nakagami-m fading channels are considered. More precisely, the joint distribution of channel amplitudes $|{\bf{h}}_L| = [|h_1|,\cdots,|h_L|]$ are modeled as a multivariate Nakagami-m distribution with exponential correlation \cite{kim2011optimal,jin2011optimal,beaulieu2011novel}, whose joint probability density function (PDF) is given by
\begin{multline}\label{eqn:joint_PDF}
{f_{|{\bf{h}}_L|}}\left( {{z_1}, \cdots ,{z_L}} \right) = \int_{t = 0}^\infty  {\frac{{{t^{m - 1}}}}{{\Gamma \left( m \right)}}{{\rm{e}}^{ - t}} } \times \\
\prod\limits_{l = 1}^L {\frac{{2{z_l}^{2m - 1}}}{{\Gamma \left( m \right){{\left( {\frac{{{\Omega _l}\left( {1 - {\rho ^{2\left( {l + \delta  - 1} \right)}}} \right)}}{m}} \right)}^m}}}{e^{ - \frac{{m{z_l}^2}}{{{\Omega _l}\left( {1 - {\rho ^{2\left( {l + \delta  - 1} \right)}}} \right)}}}}} \times\\
  {e^{ - \frac{{{\rho ^{2\left( {l + \delta  - 1} \right)}}t}}{{1 - {\rho ^{2\left( {l + \delta  - 1} \right)}}}}}}_0{F_1}\left( {;m;\frac{{m{z_l}^2{\rho ^{2\left( {l + \delta  - 1} \right)}}t}}{{{\Omega _l}{{\left( {1 - {\rho ^{2\left( {l + \delta  - 1} \right)}}} \right)}^2}}}} \right)dt, \rho \ne 1,
\end{multline}
where $\rho$ and $\delta$ denote the time correlation and the channel feedback delay, $m$ denotes the fading order, ${{\Omega _l}}$ is defined as the average power of $h_l$, i.e., ${{\Omega _l}}={\rm E}\{|h_l|^2\}$, $\Gamma(\cdot)$ denotes Gamma function and ${}_0{F_1}(\cdot)$ denotes the confluent hypergeometric limit function \cite[Eq. 9.14.1]{gradshteyn1965table}.

The system performance is fully characterized by outage probability, which is defined as the probability that the message cannot be successfully decoded, i.e., the mutual information is smaller than the target transmission rate $R$ bps/Hz. For different HARQ schemes, the outage probability over time-correlated Nakagami-m fading channels are analyzed as follows.
\subsection{Outage Probability of Type I HARQ}
For Type I HARQ, only the most recently received packet is employed for recovering the message. The outage probability $p_{out,l}^{\rm I}$ after $l$ transmissions can be formulated as
\begin{multline}\label{eqn:out_prob}
p_{out,l}^{\rm{I}} = \Pr \left( {{I_1} < R, \cdots ,{I_l} < R} \right)  \\
 = {F_{|{{\bf{h}}_l}|}}\left( {\left| {{h_1}} \right| < \sqrt {\frac{{{2^R} - 1}}{{{P_1}}}} , \cdots ,\left| {{h_l}} \right| < \sqrt {\frac{{{2^R} - 1}}{{{P_l}}}} } \right),
\end{multline}
where ${I_\iota } = {\log _2}\left( {1 + {P_\iota }{{\left| {{h_\iota }} \right|}^2}} \right)$ denotes the mutual information in the $\iota$th transmission, and ${F_{|{{\bf{h}}_l}|}} (\cdot)$ denotes the joint cumulative distribution function (CDF) with respect to $|{{\bf{h}}_l}|$, which can be derived in the following lemma.
\begin{lemma}\label{lem:joint_CDF_I} The joint CDF ${F_{|{{\bf{h}}_l}|}} (y_1,\cdots,y_l)$ can be written as a weighted sum of joint CDF of $l$ independent Nakagami RVs ${\bf A}_{\bf n}$ with parameters $(m+n_\iota,{{{\Omega _\iota }\left( {1 - {\rho ^{2\left( {\iota  + \delta  - 1} \right)}}} \right)}}(m+n_\iota)/m)$, where ${\bf n} = [n_1, \cdots, n_l]$ and $0 \le \iota \le l$. Precisely,
\begin{equation}\label{eqn:CDF_joint_exp_I}
{F_{|{{\bf{h}}_l}|}}\left( {{y_1}, \cdots ,{y_l}} \right) = \sum\limits_{{n_1}, \cdots ,{n_l } = 0}^\infty  {{W_{\bf{n}}}{F_{{{\bf{A}}_{\bf{n}}}}}\left( {{y_1}, \cdots ,{y_l}} \right)},
\end{equation}
where the coefficient $W_{\bf n}$ is given by
\begin{multline}\label{eqn:Wn_define}
{W_{\bf{n}}} = \frac{{\Gamma \left( {m + \sum\limits_{\iota  = 1}^l {{n_\iota }} } \right)}}{{\Gamma \left( m \right){{\left( {1 + \sum\limits_{\iota  = 1}^l {{\omega _\iota }} } \right)}^m}}}\prod\limits_{\iota  = 1}^l {\frac{1}{{{n_\iota }!}}{{\left( {\frac{{{\omega _\iota }}}{{1 + \sum\limits_{\iota  = 1}^l {{\omega _\iota }} }}} \right)}^{{n_\iota }}}}
\end{multline}
and satisfies $\sum\limits_{{n_1}, \cdots ,{n_l } = 0}^\infty  {{W_{\bf{n}}}}  = 1$, ${\omega _\iota }{\rm{ = }}\frac{{{\rho ^{2\left( {\iota  + \delta  - 1} \right)}}}}{{1 - {\rho ^{2\left( {\iota  + \delta  - 1} \right)}}}} $, and the joint CDF with respect to ${\bf A}_{\bf n}$, ${F_{{{\bf{A}}_{\bf{n}}}}}\left( {{y_1}, \cdots ,{y_l}} \right)$, is explicitly expressed as
\begin{equation}\label{eqn:F_A_n_joint_CDF}
{F_{{{\bf{A}}_{\bf{n}}}}}\left( {{y_1}, \cdots ,{y_l}} \right) = \prod\limits_{\iota  = 1}^l {\frac{{\Upsilon \left( {m + {n_\iota },\frac{{m{y_\iota }^2}}{{{\Omega _\iota }\left( {1 - {\rho ^{2\left( {\iota  + \delta  - 1} \right)}}} \right)}}} \right)}}{{\Gamma \left( {m + {n_\iota }} \right)}}}
\end{equation}
with $\Upsilon(\cdot,\cdot)$ being the lower incomplete Gamma function.
\end{lemma}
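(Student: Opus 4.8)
The plan is to start from the integral representation of the joint PDF in \eqref{eqn:joint_PDF} and replace the confluent hypergeometric limit function by its defining power series, $_0F_1(;m;x) = \sum_{k=0}^\infty x^k/[(m)_k\, k!]$ with $(m)_k$ the Pochhammer symbol. Applying this expansion to each of the $l$ factors and interchanging the (nonnegative) summations with the integral over $t$ by Tonelli's theorem, the product $\prod_\iota\, _0F_1$ becomes a multiple sum over ${\bf n} = [n_1,\cdots,n_l]$ in which every summand carries a factor $t^{\sum_\iota n_\iota}$. The only $t$-dependence then resides in $t^{m-1+\sum_\iota n_\iota}$ together with a collected exponential $\exp(-t(1+\sum_\iota \omega_\iota))$, where $\omega_\iota = \rho^{2(\iota+\delta-1)}/(1-\rho^{2(\iota+\delta-1)})$ arises from merging $\mathrm{e}^{-t}$ with the per-coordinate factors $\mathrm{e}^{-\rho^{2(\iota+\delta-1)} t/(1-\rho^{2(\iota+\delta-1)})}$.

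I would then evaluate the $t$-integral in closed form via $\int_0^\infty t^{s-1}\mathrm{e}^{-\beta t}\,dt = \Gamma(s)/\beta^s$, which immediately produces the factor $\Gamma(m+\sum_\iota n_\iota)/(1+\sum_\iota \omega_\iota)^{m+\sum_\iota n_\iota}$. This is the step that creates the correlation coupling across coordinates: although the post-integration summand factorizes over $\iota$ in its $z_\iota$-dependence, the normalizing power $(1+\sum_\iota \omega_\iota)^{m+\sum_\iota n_\iota}$ and the Gamma function $\Gamma(m+\sum_\iota n_\iota)$ bind the indices together, which is precisely what $W_{\bf n}$ in \eqref{eqn:Wn_define} must encode.

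The decisive bookkeeping step is to recognize each per-coordinate bracket as a Nakagami-m density with the shifted shape $m+n_\iota$ and the stated spread $\Omega_\iota(1-\rho^{2(\iota+\delta-1)})(m+n_\iota)/m$. Matching the exponent $m z_\iota^2/[\Omega_\iota(1-\rho^{2(\iota+\delta-1)})]$ and the power $z_\iota^{2m+2n_\iota-1}$ against the standard Nakagami form, and using $(m)_{n_\iota} = \Gamma(m+n_\iota)/\Gamma(m)$ to cancel the Pochhammer denominator, I expect the residual scalar attached to coordinate $\iota$ to reduce to $\omega_\iota^{n_\iota}/n_\iota!$. Peeling a factor $(1+\sum_\iota\omega_\iota)^{-n_\iota}$ off the normalizing power for each coordinate then recombines these scalars with the $t$-integral factor into exactly the claimed $W_{\bf n}$. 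Since the coordinate densities are now independent, integrating the PDF term by term yields the product of marginal Nakagami CDFs, each a normalized lower incomplete Gamma function as in \eqref{eqn:F_A_n_joint_CDF}, establishing \eqref{eqn:CDF_joint_exp_I}.

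Finally, the identity $\sum_{\bf n} W_{\bf n} = 1$ follows from the multivariate negative-binomial summation $\sum_{\bf n} \frac{\Gamma(m+\sum_\iota n_\iota)}{\Gamma(m)\prod_\iota n_\iota!}\prod_\iota p_\iota^{n_\iota} = (1-\sum_\iota p_\iota)^{-m}$ evaluated at $p_\iota = \omega_\iota/(1+\sum_{\iota'}\omega_{\iota'})$, for which $1-\sum_\iota p_\iota = (1+\sum_\iota\omega_\iota)^{-1}$ cancels the prefactor. The main obstacle I anticipate is not any single hard inequality but the care required in the regrouping of the third step: one must track the $\iota$-dependent constants through the Pochhammer cancellation and correctly split the coupling power $(1+\sum_\iota\omega_\iota)^{-\sum_\iota n_\iota}$ across coordinates, so that the independent-Nakagami factorization and the exact form of $W_{\bf n}$ emerge simultaneously.
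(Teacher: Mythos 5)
Your proposal is correct and follows essentially the same route as the paper, whose entire proof is the one-line observation that the result "directly follows from the joint PDF and the series expansion of ${}_0F_1(\cdot)$" --- precisely the expansion-plus-term-by-term-integration argument you carry out. Your write-up simply supplies the details the paper omits (the Tonelli interchange, the Gamma integral over $t$, the Pochhammer cancellation yielding the shifted-shape Nakagami densities, and the negative-binomial identity for $\sum_{\bf n} W_{\bf n}=1$), all of which check out.
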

\begin{proof}
The result directly follows from (\ref{eqn:joint_PDF}) and the series expansion of ${}_0{F_1}(\cdot)$ \cite[Eq. 9.14.1]{gradshteyn1965table}.
\end{proof}

With Lemma \ref{lem:joint_CDF_I}, the outage probability of Type I HARQ can be obtained as
\begin{equation}\label{eqn:out_type_I_fina}
p_{out,l}^{\rm{I}} = \sum\limits_{{n_1}, \cdots ,{n_l } = 0}^\infty  {{W_{\bf{n}}}{F_{{{\bf{A}}_{\bf{n}}}}}\left( {\sqrt {\frac{{{2^R} - 1}}{{{P_1}}}} , \cdots ,\sqrt {\frac{{{2^R} - 1}}{{{P_l}}}} } \right)}.
\end{equation}
In practice, the outage probability can be computed by truncating the infinite series in (\ref{eqn:out_type_I_fina}). Herein, an efficient truncation method is proposed as
\begin{equation}\label{eqn:truncation_method}
\tilde p_{out,l}^{\rm{I}} = \sum\limits_{t = 0}^N {\sum\limits_{{n_1}{\rm{ + }} \cdots {\rm{ + }}{n_l} = t}^\infty  {{W_{\bf{n}}}{F_{{{\bf{A}}_{\bf{n}}}}}\left( {\sqrt {\frac{{{2^R} - 1}}{{{P_1}}}} , \cdots ,\sqrt {\frac{{{2^R} - 1}}{{{P_l}}}} } \right)} },
\end{equation}
where $N$ defines the truncation order. It can be proved that the truncation error exponentially decreases with $N$. The proof is omitted here due to space limit.

Under high SNR, the outage probability can be asymptotically derived as shown in following theorem.
\begin{theorem}
Under high SNR regime, i.e., $P_1,\cdots,P_l \to \infty$, the outage probability $p_{out,l}^{\rm{I}}$ is written as
\begin{equation}\label{eqn:out_prob_asym}
p_{out,l}^{\rm{I}} = \frac{{{m^{ml}}\ell \left( l,\rho  \right){{\left( {{2^R} - 1} \right)}^{lm}}}}{{{\Gamma ^l}\left( {m + 1} \right)\prod\limits_{\iota  = 1}^l {{\Omega _\iota }^m{P_\iota }^m} }},
\end{equation}
where $\ell \left( l,\rho  \right) = {\left( {\left( {1 + \sum\limits_{\iota  = 1}^l {\frac{{{\rho ^{2\left( {\iota  + \delta  - 1} \right)}}}}{{1 - {\rho ^{2\left( {\iota  + \delta  - 1} \right)}}}}} } \right)\prod\limits_{\iota  = 1}^l {\left( {1 - {\rho ^{2\left( {\iota  + \delta  - 1} \right)}}} \right)} } \right)^{ - m}}$.
\end{theorem}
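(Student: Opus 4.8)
The plan is to start from the exact series representation of the outage probability in (\ref{eqn:out_type_I_fina}) and extract its leading-order behavior as all transmit powers grow without bound. The guiding idea is that a single term of the series dominates in the high-SNR limit, and computing that term yields the closed-form asymptotic expression.

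First I would note that as $P_\iota \to \infty$, each argument $\sqrt{(2^R-1)/P_\iota}$ of the joint CDF $F_{{\bf{A}}_{\bf{n}}}$ in (\ref{eqn:F_A_n_joint_CDF}) tends to zero, so every factor involves the lower incomplete Gamma function $\Upsilon(m+n_\iota,x_\iota)$ evaluated at the vanishing argument $x_\iota = m(2^R-1)/(P_\iota\Omega_\iota(1-\rho^{2(\iota+\delta-1)}))$. Using the small-argument expansion $\Upsilon(s,x) = x^s/s + O(x^{s+1})$, each normalized factor satisfies $\Upsilon(m+n_\iota,x_\iota)/\Gamma(m+n_\iota) \sim x_\iota^{m+n_\iota}/\Gamma(m+n_\iota+1)$, so the ${\bf{n}}$-th summand scales as $\prod_\iota P_\iota^{-(m+n_\iota)}$.

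The key step is then to recognize that the dominant contribution comes from the single index vector $n_1 = \cdots = n_l = 0$, since any nonzero choice increases at least one exponent $m+n_\iota$ and therefore decays strictly faster. Substituting ${\bf{n}} = {\bf{0}}$ gives $W_{\bf{0}} = (1+\sum_\iota\omega_\iota)^{-m}$ from (\ref{eqn:Wn_define}) and $F_{{\bf{A}}_{\bf{0}}} \sim \prod_\iota x_\iota^m/\Gamma(m+1)$. Collecting constants, inserting $\omega_\iota = \rho^{2(\iota+\delta-1)}/(1-\rho^{2(\iota+\delta-1)})$, and recognizing that $(1+\sum_\iota\omega_\iota)^{-m}\prod_\iota(1-\rho^{2(\iota+\delta-1)})^{-m}$ is precisely $\ell(l,\rho)$, I would arrive directly at (\ref{eqn:out_prob_asym}).

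The main obstacle will be rigorously justifying that the remaining terms are genuinely of higher order and do not accumulate back to the leading order $\prod_\iota P_\iota^{-m}$. To control this I would use the elementary bound $\Upsilon(s,x)/\Gamma(s) \le x^s/\Gamma(s+1)$, which follows from $e^{-t} \le 1$ in the defining integral, to dominate each factor of $F_{{\bf{A}}_{\bf{n}}}$; combined with the normalization $\sum_{\bf{n}} W_{\bf{n}} = 1$, this lets me bound the tail $\sum_{{\bf{n}}\ne{\bf{0}}} W_{\bf{n}} F_{{\bf{A}}_{\bf{n}}}$ by a convergent series that carries at least one extra factor of $x_\iota = O(1/P_\iota)$ relative to the leading term. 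This confirms that the ${\bf{n}} = {\bf{0}}$ contribution dominates and establishes the stated asymptotic equality.
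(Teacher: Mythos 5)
Your proof is correct and takes essentially the same approach as the paper's: both expand the lower incomplete Gamma functions in (\ref{eqn:out_type_I_fina}) for small argument, keep the dominant ${\bf n}={\bf 0}$ term, and drop higher-order infinitesimals of $\prod_{\iota=1}^{l}P_\iota^{-m}$, which collapses the series to (\ref{eqn:out_prob_asym}). The only difference is that your tail estimate via $\Upsilon(s,x)/\Gamma(s)\le x^{s}/\Gamma(s+1)$ together with $\sum_{{\bf n}}W_{{\bf n}}=1$ rigorously justifies discarding the ${\bf n}\ne{\bf 0}$ terms, a step the paper merely asserts.
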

\begin{proof}
By using the series expansion of $\Upsilon(\cdot,\cdot)$ \cite[Eq. 8.354.1]{gradshteyn1965table} and omitting the higher order infinitesimal of ${{{ {\prod\limits_{\iota  = 1}^l {{P_\iota }}^{ - m} } }}} $, the outage probability (\ref{eqn:out_type_I_fina}) can be asymptotically expressed as (\ref{eqn:out_prob_asym}).
\end{proof}

\subsection{Outage Probability of HARQ-CC}
In HARQ-CC scheme, all the previously received packets are combined through MRC for decoding. The outage probability after $l$ HARQ rounds is thus written as
\begin{align}\label{eqn:out_harq_cc}
p_{out,l}^{CC} &= \Pr \left( {{{\log }_2}\left( {1 + \sum\limits_{\iota  = 1}^l {{P_\iota }{{\left| {{h_\iota }} \right|}^2}} } \right) < R} \right) \notag \\
 &= \Pr \left( {Y_l \triangleq \sum\limits_{\iota  = 1}^l {{P_\iota }{{\left| {{h_\iota }} \right|}^2}}  < {2^R} - 1} \right) = {F_{{Y_l}}}\left( 2^R-1 \right).
\end{align}
where ${F_{{Y_l}}}\left( \cdot \right)$ denotes the CDF of $Y_l$. After deriving the CDF ${F_{{Y_l}}}\left( \cdot \right)$ using the method of moment generating function (MGF), the outage probability $p_{out,l}^{CC}$ is derived in the following theorem. 
\begin{theorem} \label{lem:harq_cc}
The outage probability for HARQ-CC scheme $p_{out,l}^{CC}$ can be obtained as
\begin{multline}\label{eqn:inverse_lp_cal_sec}
p_{out,l}^{CC} = 1 + \frac{{{m^{ml}}\ell \left( l,\rho  \right)}}{{\prod\nolimits_{\iota  = 1}^l {{\Omega _\iota }^m{P_\iota }^m} }} \times \\
\sum\limits_{\kappa  = 1}^{\cal K} {\sum\limits_{\varsigma  = 1}^{m{q_\kappa }} {\frac{{{\Phi _{\kappa \varsigma }}\left( { - {\lambda _\kappa }} \right)}}{{\left( {m{q_\kappa } - \varsigma } \right)!\left( {\varsigma  - 1} \right)!}}{(2^R-1)^{m{q_\kappa } - \varsigma }}{e^{ - {\lambda _\kappa }(2^R-1)}}} }
\end{multline}
where $\lambda_1,\cdots,\lambda_{\mathcal K}$ define $\mathcal K$ distinct poles of the MGF of $Y_l$ with multiplicities $q_1,\cdots,q_{\mathcal K}$, respectively, $\sum\nolimits_{\kappa  = 1}^{\mathcal K} {{q_\kappa }}  = l$, and  ${\Phi _{\kappa \varsigma }}\left( s \right) = \frac{{{d^{\varsigma  - 1}}}}{{d{s^{\varsigma  - 1}}}}\left( {{s^{ - 1}}\prod\limits_{\tau  = 1,\tau  \ne \kappa }^{\cal K} {{{\left( {s{\rm{ + }}{\lambda _\tau }} \right)}^{ - m{q_\tau }}}} } \right)$. Under high SNR regime, the outage probability $p_{out,l}^{CC}$ can also be expressed asymptotically as
\begin{equation}\label{eqn:out_cc_asy}
p_{out,l}^{CC} = \frac{{{m^{ml}}\ell \left( l,\rho  \right){{\left( {{2^R} - 1} \right)}^{ml}}}}{{\Gamma \left( {ml + 1} \right)\prod\nolimits_{\iota  = 1}^l {{\Omega _\iota }^m{P_\iota }^m} }}.
\end{equation}
\end{theorem}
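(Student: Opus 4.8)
The plan is to characterize $Y_l=\sum_{\iota=1}^l P_\iota|h_\iota|^2$ through its moment generating function (MGF), show that the MGF is a rational function of the Laplace variable $s$, and then recover the CDF $F_{Y_l}$ in (\ref{eqn:out_harq_cc}) by inverse Laplace transform. First I would invoke Lemma~\ref{lem:joint_CDF_I}, which writes the fading vector as a $W_{\bf n}$-weighted mixture of independent Nakagami vectors ${\bf A}_{\bf n}$, whose $\iota$th component has parameters $(m+n_\iota,\ \Omega_\iota(1-\rho^{2(\iota+\delta-1)})(m+n_\iota)/m)$. Squaring a Nakagami RV yields a Gamma RV, so conditioned on ${\bf n}$ each $P_\iota|h_\iota|^2$ is Gamma with shape $m+n_\iota$ and scale $P_\iota\theta_\iota$, where $\theta_\iota=\Omega_\iota(1-\rho^{2(\iota+\delta-1)})/m$ is \emph{independent} of $n_\iota$. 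Conditional independence then gives the factored conditional MGF $\prod_\iota(1+sP_\iota\theta_\iota)^{-(m+n_\iota)}$, and averaging over ${\bf n}$ yields
\[
\mathcal M_{Y_l}(s)=\sum_{\bf n}W_{\bf n}\prod_{\iota=1}^l(1+sP_\iota\theta_\iota)^{-(m+n_\iota)}.
\]

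The central step, and the main obstacle, is to collapse this infinite multiple series into closed form. I would recognize the weights $W_{\bf n}$ in (\ref{eqn:Wn_define}) as negative-multinomial weights and sum the series using the generalized multinomial identity $\sum_{\bf n}\frac{\Gamma(m+\sum_\iota n_\iota)}{\Gamma(m)\prod_\iota n_\iota!}\prod_\iota x_\iota^{n_\iota}=(1-\sum_\iota x_\iota)^{-m}$, applied with $x_\iota\propto(1+sP_\iota\theta_\iota)^{-1}$. After simplification the $(1+sP_\iota\theta_\iota)^{-m}$ prefactors cancel and one is left with $\mathcal M_{Y_l}(s)=Q(s)^{-m}$, where $Q(s)$ is a degree-$l$ polynomial with $Q(0)=1$ and leading coefficient $(1+\sum_\iota\omega_\iota)\prod_\iota P_\iota\theta_\iota$. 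A key piece of book-keeping is then to verify that this leading coefficient reproduces the prefactor $m^{ml}\ell(l,\rho)/\prod_\iota\Omega_\iota^m P_\iota^m$ through the definition of $\ell(l,\rho)$; this is what ties the final form together.

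For the exact result (\ref{eqn:inverse_lp_cal_sec}) I would obtain the CDF from its Laplace transform $\mathcal M_{Y_l}(s)/s=s^{-1}Q(s)^{-m}$. Writing the distinct roots of $Q$ as $-\lambda_\kappa$ with multiplicities $q_\kappa$ (so $\sum_\kappa q_\kappa=l$), this transform has poles of order $mq_\kappa$ at $-\lambda_\kappa$ together with a simple pole at $s=0$. The residue at $s=0$ equals $\mathcal M_{Y_l}(0)=1$ and produces the leading constant, while the higher-order residues at $-\lambda_\kappa$ are exactly the derivatives $\Phi_{\kappa\varsigma}(-\lambda_\kappa)$ appearing in the statement. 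Term-by-term inversion via $\mathcal L^{-1}\{(s+\lambda)^{-\varsigma}\}=y^{\varsigma-1}e^{-\lambda y}/(\varsigma-1)!$, evaluated at $y=2^R-1$, then delivers (\ref{eqn:inverse_lp_cal_sec}).

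Finally, the asymptotics (\ref{eqn:out_cc_asy}) follow from the small-argument behaviour of $F_{Y_l}$, since letting $P_\iota\to\infty$ with the threshold $2^R-1$ fixed pushes it into the lower tail of $Y_l$. From $\mathcal M_{Y_l}(s)\sim A\,s^{-ml}$ as $s\to\infty$, with $A=m^{ml}\ell(l,\rho)/\prod_\iota\Omega_\iota^m P_\iota^m$ the reciprocal of the leading coefficient raised to the $m$th power, a standard Tauberian inversion gives $f_{Y_l}(y)\sim A\,y^{ml-1}/\Gamma(ml)$ and hence $F_{Y_l}(y)\sim A\,y^{ml}/\Gamma(ml+1)$; substituting $y=2^R-1$ yields (\ref{eqn:out_cc_asy}). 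Equivalently, I would observe that only the ${\bf n}={\bf 0}$ term survives at leading order, because the conditional Gamma sum then has minimal total shape $ml$, which gives a quick cross-check on the constant $A$.
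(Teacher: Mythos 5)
Your proposal is correct and follows the same skeleton as the paper's proof: compute the MGF of $Y_l$, recognize it as the $(-m)$-th power of a degree-$l$ polynomial with prefactor $m^{ml}\ell(l,\rho)/\prod_{\iota}\Omega_\iota^m P_\iota^m$, invert $M_{Y_l}(-s)/s$ by residues/partial fractions to get the CDF, and read the asymptotics off the low-order behavior. The execution differs in three places. First, you build the MGF by conditioning on the mixture index $\mathbf{n}$ of Lemma~\ref{lem:joint_CDF_I} and resumming with the negative-multinomial identity, while the paper integrates the joint PDF (\ref{eqn:joint_PDF}) directly to reach (\ref{eqn:mgf_Y_l_2}); these are the same computation organized differently, and your bookkeeping of the leading coefficient into $\ell(l,\rho)$ checks out. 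Second --- the only real gap --- you never establish where the roots $-\lambda_\kappa$ lie. Identifying the term-by-term inverse transform with the CDF on $y>0$ tacitly requires all poles of $M_{Y_l}(-s)/s$ other than $s=0$ to sit in the open left half-plane; the paper proves more (the $\lambda_\kappa$ are real and positive) via $M_{Y_l}(s)=\det\left({\bf I}-s{\bf F}^{1/2}{\bf E}{\bf F}^{1/2}\right)^{-m}$ and positive definiteness of $\bf E$ in (\ref{eqn:def_E}). Your argument can be patched in one line (e.g.\ $|{\rm E}(e^{-sY_l})|\le 1$ for ${\rm Re}(s)\ge 0$ forces ${\rm Re}(\lambda_\kappa)>0$), but as written the step is missing. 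Third, for (\ref{eqn:out_cc_asy}) the paper expands $F_{Y_l}$ in a Maclaurin series and uses the initial-value theorem of the Laplace transform to show $F_{Y_l}^{(n)}(0)=0$ for $n<ml$, whereas you invoke a Tauberian argument; note the theorem concerns fixed $y=2^R-1$ with $P_\iota\to\infty$, not $y\to 0$ at fixed powers, so your scaling remark (``pushes it into the lower tail'') is doing real work there. Your $\mathbf{n}=\mathbf{0}$ dominant-term cross-check --- each $\mathbf{n}\ne\mathbf{0}$ mixture component contributes only $O\left(\prod_\iota P_\iota^{-(m+n_\iota)}\right)$ --- is in fact the cleaner way to make that limit rigorous, and it recovers the constant in (\ref{eqn:out_cc_asy}) exactly.
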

\begin{proof}
Please see Appendix \ref{app:harq_cc}.
\end{proof}
\subsection{Outage Probability of HARQ-IR}
Different from Type I HARQ and HARQ-CC, HARQ-IR accumulates mutual information in all previous HARQ rounds for decoding. From information theoretical perspective, an outage happens when the accumulated mutual information is less than the target transmission rate $R$. Thus the outage probability after $l$ HARQ rounds is formulated as
\begin{equation}\label{eqn:outage_prob_IR}
p_{out,l}^{IR} = \Pr \left( {\sum\limits_{\iota  = 1}^l {{{\log }_2}\left( {1 + {P_\iota }{{\left| {{h_\iota }} \right|}^2}} \right)}  < R} \right).
\end{equation}
Due to the time correlation among $h_l$, it is intractable to find closed-form expression for (\ref{eqn:outage_prob_IR}). Instead, a lower bound of $p_{out,l}^{IR}$ is adopted to characterize the outage probability of HARQ-IR. By using Jensen's inequality, $p_{out,l}^{IR}$ is lower bounded as
\begin{align}\label{eqn:ir_uper_bound}
p_{out,l}^{IR} &\ge \Pr \left( {{{\log }_2}\left( {1 + \frac{1}{l}\sum\limits_{\iota  = 1}^l {{P_\iota }{{\left| {{h_\iota }} \right|}^2}} } \right) < \frac{R}{l}} \right) \notag \\
&= {F_{{Y_l}}}\left( {l\left( {{2^{R/l}} - 1} \right)} \right) \triangleq {p_{out,l}^{IR,lower}}.
\end{align}
With the CDF ${F_{{Y_l}}}\left( \cdot \right)$ derived in Theorem \ref{lem:harq_cc}, the lower bound ${p_{out,l}^{IR,lower}}$ and its asymptotic expression can be derived in the following theorem.
\begin{theorem} \label{lem:harq_ir}
The lower bound of the outage probability $p_{out,l}^{IR,lower}$ can be obtained as
\begin{multline}\label{eqn:inverse_lp_cal_sec_ir}
{p_{out,l}^{IR,lower}} = 1 + \frac{{{m^{ml}}\ell \left( l,\rho  \right)}}{{\prod\nolimits_{\iota  = 1}^l {{\Omega _\iota }^m{P_\iota }^m} }} \times \\
\sum\limits_{\kappa  = 1}^{\cal K} {\sum\limits_{\varsigma  = 1}^{m{q_\kappa }} {\frac{{{\Phi _{\kappa \varsigma }}\left( { - {\lambda _\kappa }} \right)}}{{\left( {m{q_\kappa } - \varsigma } \right)!\left( {\varsigma  - 1} \right)!}}{(l({{2^{R/l}} - 1}))^{m{q_\kappa } - \varsigma }}{e^{ - {\lambda _\kappa }(l({{2^{R/l}} - 1}))}}} }.
\end{multline}
Under high SNR regime, ${p_{out,l}^{IR,lower}}$ is further simplified as
\begin{equation}\label{eqn:out_cc_asy_ir}
p_{out,l}^{IR,lower} = \frac{{{(lm)^{ml}}\ell \left( l,\rho  \right){{\left( {{2^{R/l}} - 1} \right)}^{ml}}}}{{\Gamma \left( {ml + 1} \right)\prod\nolimits_{\iota  = 1}^l {{\Omega _\iota }^m{P_\iota }^m} }}.
\end{equation}
\end{theorem}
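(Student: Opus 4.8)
The plan is to obtain both claims as direct consequences of Theorem~\ref{lem:harq_cc}, since the Jensen-based lower bound in (\ref{eqn:ir_uper_bound}) is nothing but the CDF $F_{Y_l}$ already characterized there, merely evaluated at a different point. Writing $\theta_l \triangleq l(2^{R/l}-1)$ for the new threshold, I first observe from (\ref{eqn:ir_uper_bound}) that $p_{out,l}^{IR,lower} = F_{Y_l}(\theta_l)$, whereas the HARQ-CC outage in (\ref{eqn:out_harq_cc}) is $F_{Y_l}(2^R-1)$. The derivation of $F_{Y_l}$ in Appendix~\ref{app:harq_cc} (via the MGF and a partial-fraction inversion) produces the generic CDF as a function of its argument, and the expression (\ref{eqn:inverse_lp_cal_sec}) is just that generic form specialized to $2^R-1$. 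Hence the exact formula (\ref{eqn:inverse_lp_cal_sec_ir}) follows immediately by substituting $\theta_l$ for $2^R-1$ everywhere in the summand, leaving the poles $\lambda_\kappa$, the multiplicities $q_\kappa$, and the coefficients $\Phi_{\kappa\varsigma}$ unchanged, since these depend only on the transmit powers and the fading statistics, not on the evaluation point.

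For the asymptotic claim I would reuse the high-SNR analysis behind (\ref{eqn:out_cc_asy}). The key point to verify is that $\theta_l = l(2^{R/l}-1)$ is a \emph{fixed constant}, independent of the transmit powers $P_1,\dots,P_l$. Consequently, as $P_\iota \to \infty$ the characteristic scale of $Y_l = \sum_\iota P_\iota|h_\iota|^2$ grows without bound while $\theta_l$ stays put, so the normalized threshold tends to zero and the small-argument expansion of the CDF applies exactly as in the HARQ-CC case. That analysis exhibits the leading-order behavior of the CDF as
\begin{equation*}
F_{Y_l}(x) = \frac{m^{ml}\ell(l,\rho)}{\Gamma(ml+1)\prod_{\iota=1}^{l}\Omega_\iota^m P_\iota^m}\,x^{ml} + o\!\left(\prod_{\iota=1}^{l}P_\iota^{-m}\right),
\end{equation*}
which for $x = 2^R-1$ is precisely (\ref{eqn:out_cc_asy}). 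Substituting $x = \theta_l$ and using the identity $m^{ml}\,l^{ml} = (lm)^{ml}$ then yields (\ref{eqn:out_cc_asy_ir}).

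The main obstacle, such as it is, is purely one of bookkeeping rather than analysis: I must confirm that when transplanting the CDF and its leading-order term from the HARQ-CC derivation, nothing in the coefficients or poles secretly depends on the argument $2^R-1$. Since the partial-fraction data are intrinsic to the MGF of $Y_l$, and the leading $x^{ml}$ coefficient is fixed by the aggregate shape parameter $ml$ of the sum, both are argument-agnostic, so the substitution is legitimate and no genuinely new estimate is required. In short, Theorem~\ref{lem:harq_ir} is a corollary of Theorem~\ref{lem:harq_cc}, obtained by re-evaluating the same CDF at the Jensen threshold $l(2^{R/l}-1)$.
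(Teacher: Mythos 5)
Your proof is correct and follows exactly the route the paper intends: the paper states Theorem~\ref{lem:harq_ir} without a separate proof precisely because, as you argue, $p_{out,l}^{IR,lower}=F_{Y_l}\left(l\left(2^{R/l}-1\right)\right)$ and the CDF expression (\ref{eqn:inverse_lp_cal}) together with its high-SNR expansion (\ref{eqn:higher_order_cc}) from Appendix~\ref{app:harq_cc} hold for a generic, power-independent argument, so both claims follow by substitution (with $m^{ml}l^{ml}=(lm)^{ml}$). Your explicit check that the poles, multiplicities, and coefficients $\Phi_{\kappa\varsigma}$ depend only on the MGF of $Y_l$ and not on the evaluation point is the right bookkeeping and matches the paper's reasoning.
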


\section{Optimal Power allocation}\label{sec:op_cl}
In this section, the problem of power allocation is studied for the three HARQ schemes. Generally, the average total transmission power for HARQ is defined as ${\bar P = \sum\nolimits_{l = 1}^L {{P_l}{p_{out,l - 1}}} }$ \cite{chaitanya2013optimal}. Here ${p_{out,l}}$ refers to the outage probability after $l$ transmissions and it unifies the cases of ${p_{out,l}^{I}}$, ${p_{out,l}^{CC}}$ and ${p_{out,l}^{IR,lower}}$. When power efficiency is concerned with certain performance requirement, the transmission power among multiple HARQ rounds should be properly designed to minimize the total transmission power while guaranteeing the performance. The power allocation problem can be formulated as

\begin{equation}\label{eqn:opt_prob_simp}
\begin{array}{*{20}{cl}}
{\mathop {\min }\limits_{{P_1},{P_2}, \cdots {P_L}} }&{\bar P = \sum\limits_{l = 1}^L {{P_l}{p_{out,l - 1}}} }\\
{{\rm{s}}{\rm{.t}}{\rm{.}}}&{{P_l} \ge 0,0 \le l \le L}\\
{}&{{p_{out,L}} \le \varepsilon }\\
\end{array},
\end{equation}
where $\varepsilon$ represents the outage tolerance.

Due to the complicated expressions of the exact outage probabilities given in (\ref{eqn:out_type_I_fina}), (\ref{eqn:inverse_lp_cal_sec}) and (\ref{eqn:inverse_lp_cal_sec_ir}), it is impossible to find closed-form power allocation solutions directly. However, interior-point methods can be exploited to numerically solve the problem (\ref{eqn:opt_prob_simp}). Meanwhile, based on the asymptotic expressions of the outage probabilities, an efficient power allocation scheme can be found as follows.

Notice that the asymptotic outage probabilities in (\ref{eqn:out_prob_asym}), (\ref{eqn:out_cc_asy}) and (\ref{eqn:out_cc_asy_ir}) can be unified as
\begin{equation}\label{eqn:Pout_def}
{p_{out,l}} \simeq  \frac{{{\phi _l}}}{{{{\left( {\prod\limits_{k = 1}^l {{P_k}} } \right)}^m}}},\, 0 \le l \le L,
\end{equation}
where $\phi_l$ depends on HARQ schemes, more precisely,
\begin{equation}\label{eqn:Wl_def}
{\phi_l} = \left\{ {\begin{array}{*{20}{cl}}
{\frac{{{m^{ml}}\ell \left( {l,\rho } \right){{\left( {{2^R} - 1} \right)}^{ml}}}}{{{\Gamma ^l}\left( {m + 1} \right)\prod\nolimits_{\iota  = 1}^l {{\Omega _\iota }^m} }},}&{{\textrm{Type}}\;{\textrm{I}};}\\
{\frac{{{m^{ml}}\ell \left( {l,\rho } \right){{\left( {{2^R} - 1} \right)}^{ml}}}}{{\Gamma \left( {ml + 1} \right)\prod\nolimits_{\iota  = 1}^l {{\Omega _\iota }^m} }},}&{{\textrm{HARQ-CC}};}\\
{\frac{{{{\left( {ml} \right)}^{ml}}\ell \left( {l,\rho } \right){{\left( {{2^{R/l}} - 1} \right)}^{ml}}}}{{\Gamma \left( {ml + 1} \right)\prod\nolimits_{\iota  = 1}^l {{\Omega _\iota }^m} }},}&{{\textrm{HARQ-IR}}.}
\end{array}} \right.
\end{equation}

Substituting (\ref{eqn:Pout_def}) into (\ref{eqn:opt_prob_simp}), the Lagrangian of the optimization problem (\ref{eqn:opt_prob_simp}) is formed as
\begin{multline}\label{eqn:lagrangian_power}
\frak L\left( {{P_1}, \cdots ,{P_L},\mu ,{\nu _1}, \cdots ,{\nu _L}} \right) = \sum\limits_{l = 1}^L {{P_l}\frac{{{\phi_{l - 1}}}}{{{{\left( {\prod\limits_{k = 1}^{l - 1} {{P_k}} } \right)}^m}}}} \\
 +  \mu \left( {\frac{{{\phi_L}}}{{{{\left( {\prod\limits_{k = 1}^L {{P_k}} } \right)}^m}}} - \varepsilon } \right) - \sum\limits_{l = 1}^L {{\nu _l}{P_l}},
\end{multline}
where $\mu,\nu_1,\cdots,\nu_L$ are the Lagrangian multipliers of the constraints in the problem (\ref{eqn:opt_prob_simp}). Since the Karush-Khun-Tucker (KKT) conditions are necessary for an optimal solution, we have
\begin{equation}\label{eqn:kkt1}
{\left. {\frac{{\partial \frak L}}{{\partial {P_n}}}} \right|_{\left( {{P_1^*}, \cdots ,{P_L^*},{\mu ^*},{\nu _1}^*, \cdots ,{\nu _L}^*} \right)}} = 0,
\end{equation}
\begin{equation}\label{eqn:kkt2}
{\mu ^*}\left( {\frac{{{\phi_L}}}{{{{\left( {\prod\limits_{k = 1}^L {P_k^*} } \right)}^m}}} - \varepsilon } \right) = 0,
\end{equation}
\begin{equation}\label{eqn:kkt3}
{\nu _l}^*{P_l}^* = 0,
\end{equation}
where $\mu^*,{\nu_1}^*,\cdots,{\nu_L}^*, {P_l}^*$ denote the optimal Lagrangian multipliers and the optimal power allocation, respectively. Based on the KKT conditions (\ref{eqn:kkt1})-(\ref{eqn:kkt3}), the optimal power allocation solution to (\ref{eqn:opt_prob_simp}) could be found in closed form as follows.
\begin{theorem}\label{the:power_opt}
The optimal solution to the problem (\ref{eqn:opt_prob_simp}) is uniquely given as
\begin{align}\label{eqn:P_L_star_sec}
P_L^* 
 &= {\left( {\frac{{{\phi_L}\prod\limits_{k = 2}^L {{{\left( {\left( {m + 1} \right)\frac{{{\phi_{k - 1}}}}{{{\phi_{k - 2}}}}} \right)}^{\frac{1}{{{{\left( {m + 1} \right)}^{k - 1}}}}}}} }}{{{\phi_{L - 1}}{{\left( {m + 1} \right)}^{L - 1}}\varepsilon }}} \right)^{\frac{{{{\left( {m + 1} \right)}^{L - 1}}}}{{{{\left( {m + 1} \right)}^L} - 1}}}},
\end{align}
\begin{multline}\label{eqn:kkt1_subs_fin}
P_n^* = \prod\limits_{k = n + 1}^L {{{\left( {\left( {m + 1} \right)\frac{{{\phi_{k - 1}}}}{{{\phi_{k - 2}}}}} \right)}^{\frac{1}{{{{\left( {m + 1} \right)}^{k - n}}}}}}} {P_L^*}^{\frac{1}{{{{\left( {m + 1} \right)}^{L - n}}}}}, \\
\quad {\textrm for}\quad 1 \le n \le L-1.
\end{multline}
Moreover, the minimal average total transmission power $\bar P^*$ is
\begin{align}\label{eqn:max_avg_pow_sec}
\bar P^* &= \frac{{\varepsilon \left({P_L}{^*}\right)^{ {m + 1} }{\phi_{L - 1}}}}{{{\phi_L}}}\frac{{ {{{\left( {m + 1} \right)}^L} - 1} }}{m}.
\end{align}
\end{theorem}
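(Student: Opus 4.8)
The plan is to solve the KKT system (\ref{eqn:kkt1})--(\ref{eqn:kkt3}) explicitly and then invoke convexity to upgrade the resulting stationary point to the unique global optimum; throughout I abbreviate $\Pi_j=\prod_{k=1}^{j}P_k^*$ (with $\Pi_0=1$) and take $\phi_0=1$, i.e.\ $p_{out,0}=1$, since the message is in outage before any transmission. I would first argue that every $P_l^*>0$: the constraint $p_{out,L}=\phi_L/\Pi_L^{m}\le\varepsilon$ forces $\Pi_L\ge(\phi_L/\varepsilon)^{1/m}>0$, so no power can vanish, and (\ref{eqn:kkt3}) then gives $\nu_l^*=0$ for all $l$. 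Writing (\ref{eqn:kkt1}) at $n=L$ shows that $\mu^*=0$ would require $\phi_{L-1}/\Pi_{L-1}^{m}=0$, which is impossible; hence $\mu^*>0$ and, by (\ref{eqn:kkt2}), the outage constraint is active, $p_{out,L}=\varepsilon$.

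Next I would extract a recursion from stationarity. Differentiating the Lagrangian (\ref{eqn:lagrangian_power}) and introducing $S_n:=P_n^*\,\phi_{n-1}/\Pi_{n-1}^{m}$ (the $n$th term of $\bar P$) together with $T_n:=\sum_{l=n+1}^{L}P_l^*\phi_{l-1}/\Pi_{l-1}^{m}+\mu^*\phi_L/\Pi_L^{m}$, the condition (\ref{eqn:kkt1}) collapses to $S_n=m\,T_n$. Since $T_{n-1}=S_n+T_n$, this yields the clean geometric recursion $T_{n-1}=(m+1)T_n$, equivalently $S_n=(m+1)S_{n+1}$. Substituting the definition of $S_n$ into this last identity gives the first-order recursion $(P_n^*)^{m+1}=(m+1)(\phi_n/\phi_{n-1})\,P_{n+1}^*$, and unrolling it from $n$ up to $L$ produces (\ref{eqn:kkt1_subs_fin}).

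It then remains to pin down $P_L^*$ and $\bar P^*$. Imposing the active constraint $\Pi_L=(\phi_L/\varepsilon)^{1/m}$ and substituting (\ref{eqn:kkt1_subs_fin}) reduces everything to a single equation in $P_L^*$; after collecting the nested geometric exponents and using the telescoping identity $\prod_{k=2}^{L}(m+1)\phi_{k-1}/\phi_{k-2}=(m+1)^{L-1}\phi_{L-1}$, it solves to (\ref{eqn:P_L_star_sec}). For the objective, $S_n=mT_n=mT_0/(m+1)^n$ gives $\bar P^*=\sum_{n=1}^L S_n=T_0\big(1-(m+1)^{-L}\big)$; combined with $T_0=\bar P^*+\mu^*\varepsilon$ this yields $\bar P^*=\mu^*\varepsilon\big((m+1)^L-1\big)$, and evaluating $\mu^*$ from $S_L=m\mu^*\varepsilon$ together with $\Pi_{L-1}^m=\phi_L/(\varepsilon (P_L^*)^m)$ delivers (\ref{eqn:max_avg_pow_sec}).

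The heaviest step is the exponent bookkeeping when inverting the product constraint for $P_L^*$, but it is purely mechanical once the telescoping identity above is in hand. The genuine gap is the word "uniquely" in the statement, since the KKT conditions are only necessary in general. I would close this by observing that under the change of variables $P_k=e^{x_k}$ both $\bar P$ and $\phi_L/\Pi_L^{m}$ are posynomials, so (\ref{eqn:opt_prob_simp}) is a geometric program and is convex in the $x_k$; hence the KKT conditions are also sufficient and the stationary point constructed above is the unique global minimizer.
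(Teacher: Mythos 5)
Your proposal is correct, and its computational core coincides with the paper's own proof in Appendix B: both establish $P_l^*>0$ hence $\nu_l^*=0$, extract from stationarity the recursion $(P_n^*)^{m+1}=(m+1)\frac{\phi_n}{\phi_{n-1}}P_{n+1}^*$, unroll it to (\ref{eqn:kkt1_subs_fin}), show $\mu^*\neq 0$ so that complementary slackness activates the outage constraint, and invert the product constraint to get (\ref{eqn:P_L_star_sec}). Your presentation differs in two useful ways. First, the $S_n$/$T_n$ bookkeeping ($S_n=mT_n$, $T_{n-1}=(m+1)T_n$) is a cleaner derivation of the recursion than the paper's device of combining the stationarity conditions at $n$ and $n-1$, and your geometric-series evaluation $\bar P^*=\sum_n S_n$ with $S_L=m\mu^*\varepsilon$ reaches (\ref{eqn:max_avg_pow_sec}) directly, whereas the paper routes through the identity (\ref{eqn:pn_recur_rela}) and leaves the final assembly implicit. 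Second, and more substantively, the paper only invokes the \emph{necessity} of the KKT conditions, so its own proof never actually justifies the word ``uniquely'' in the theorem statement; your observation that (\ref{eqn:opt_prob_simp}) with the asymptotic outage expression (\ref{eqn:Pout_def}) is a geometric program, convex under $P_k=e^{x_k}$, supplies the missing sufficiency argument. One refinement would make that last step airtight: convexity alone gives a convex \emph{set} of minimizers, so uniqueness requires either strict convexity in the $x_k$ (which holds here, since the exponent vectors of the monomials $P_l\prod_{k<l}P_k^{-m}$ form a triangular, hence full-rank, system) or the combination of existence of a minimizer (the objective is coercive on the feasible set), Slater's condition, and the fact that the KKT system you solved has exactly one solution. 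Either completion is routine, so this is a polish item rather than a gap.
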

\begin{proof}
Please see Appendix \ref{app:power_opt}.
\end{proof}

\section{Numerical Results and Discussions}\label{sec:numer}
In this section, the analytical results are verified through simulations, and our proposed power allocation (PPA) scheme is compared to the fixed power allocation (FPA) scheme in \cite{jinho2013energy}. Notice that for FPA, the problem (\ref{eqn:opt_prob_simp}) is solved by adding the constraint $P_1=\cdots=P_L$. In the sequel, we take systems with $\Omega_1=\cdots=\Omega_l=1$, $\delta=1$s and $R=2$bps/Hz as examples.

\subsection{Comparison of PPA and FPA}
The minimal total transmission powers $\bar P^*$ of the PPA and FPA schemes are compared in Fig. \ref{fig:val}. The outage-limited systems with various HARQ and parameters as $L=2$, $m=2$ and $\rho=0.5$ are considered. Clearly, the results of PPA using asymptotic outage expressions (PPA-A) agree well with that of PPA using exact outage expressions (PPA-E) under low outage constraint $\varepsilon$. It is also readily found that PPA-A performs better than FPA under low outage constraint $\varepsilon$, and their performance gap significantly increases when $\varepsilon$ decreases. Moreover, it reveals that HARQ-IR is superior to Type I HARQ and HARQ-CC in terms of power efficiency.
\begin{figure}
  \centering
  \includegraphics[width=3in]{./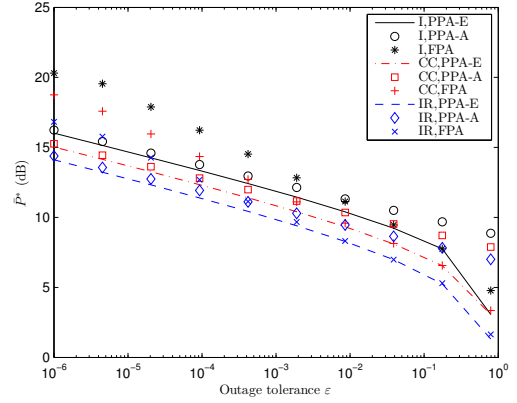}\\
  \caption{Comparison of the proposed power allocation with the fixed power allocation.}\label{fig:val}
\end{figure}

\subsection{Impacts of Time Correlation}
Since the performance of PPA-A is asymptotically equivalent to that of PPA-E under low $\varepsilon$, PPA-A is adopted to test the impact of time correlation on power allocation. Fig. \ref{fig:imp_corr} plots the minimal total transmission power $\bar P^*$ against time correlation coefficient $\rho$ by setting parameters as $m=2$ and $\varepsilon=10^{-6}$. It can be easily seen that the increase of time correlation $\rho$ would lead to the increase of the minimal total transmission power $\bar P^*$ for both $L=2$ and $L=4$. It means that time correlation has negative effect on power efficiency under low outage constraint.
\begin{figure}
  \centering
  \includegraphics[width=3in]{./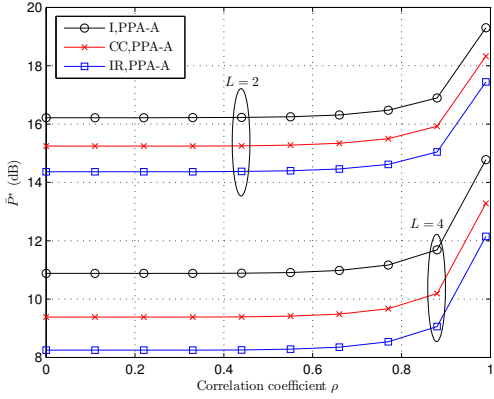}\\
  \caption{Impact of time correlation.}\label{fig:imp_corr}
\end{figure}
\subsection{Impacts of Fading Order}
Fig. \ref{fig:order} depicts the impact of fading order $m$ on power allocation by setting $\rho=0.5$ and $\varepsilon=10^{-6}$. Clearly, the minimal total transmission power $\bar P^*$ decreases with the increase of fading order $m$. In fact, higher fading order leads to higher diversity introduced by the channel, thus reducing the power consumption given a certain outage performance constraint.
\begin{figure}
  \centering
  \includegraphics[width=3in]{./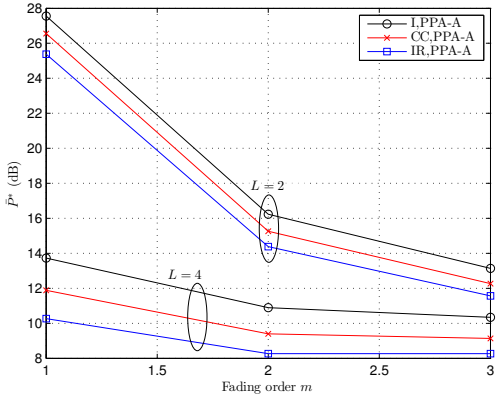}\\
  \caption{Impact of fading order.}\label{fig:order}
\end{figure}

\section{Conclusions}\label{sec:con}
Outage-limited power allocation for various HARQ schemes operating over time-correlated Nakagami-m fading channels has been investigated in this paper. After deriving the outage probabilities and their asymptotic expressions, an optimal power allocation solution has been proposed in closed form. It has been demonstrated that our proposed solution can achieve significant power saving when compared to the fixed power allocation solution. The superiority of the proposed optimal solution in terms of power consumption is further enhanced when the channel time correlation is reduced and/or the fading order increases.
\section{Acknowledgements}
This work was supported in part by the Research Committee of University of Macau under grants: MYRG101(Y1-L3)-FST13-MSD, MYRG2014-00146-FST and MYRG2016-00146-FST, in part by the Macau Science and Technology Development Fund under grant FDCT 091/2015/A3, and in part by the National Natural Science Foundation of China under grant No.61601524.
\appendices
\section{Proof of Theorem \ref{lem:harq_cc}}\label{app:harq_cc}
The moment generation function (MGF) with respect to $Y_l$ can be written as
\begin{multline}\label{eqn:mgf_Y_l}
{M_{{Y_l}}}\left( s \right) = {\rm{E}}\left( {{e^{s{Y_l}}}} \right)\\
 = \int\limits_0^\infty  { \cdots \int\limits_0^\infty  {{e^{s\sum\limits_{\iota  = 1}^l {{P_\iota }{x_\iota }^2} }}{f_{|{{\bf{h}}_l}|}}\left( {{x_1}, \cdots ,{x_l}} \right)d{x_1} \cdots {dx_l}} }.
\end{multline}
Plugging (\ref{eqn:joint_PDF}) into (\ref{eqn:mgf_Y_l}), after some algebraic manipulations, it follows that
\begin{align}\label{eqn:mgf_Y_l_2}
{M_{{Y_l}}}\left( s \right) &= {{{\left( \begin{array}{l}
\prod\limits_{\iota  = 1}^l {\left( {1 - s{P_\iota }{\Omega _\iota }\left( {1 - {\rho ^{2\left( {\iota  + \delta  - 1} \right)}}} \right)/m} \right)} \\
 \times \left( {1 + \sum\limits_{\iota  = 1}^l {\frac{{s{P_\iota }{\Omega _\iota }{\rho ^{2\left( {\iota  + \delta  - 1} \right)}}/m}}{{s{P_\iota }{\Omega _\iota }\left( {1 - {\rho ^{2\left( {\iota  + \delta  - 1} \right)}}} \right)/m - 1}}} } \right)
\end{array} \right)}^{ - m}}} \notag \\
& = \frac{{{m^{ml}}\ell \left( l,\rho  \right)}}{{\prod\limits_{\iota  = 1}^l {{\Omega _\iota }^m{P_\iota }^m} }}\prod\limits_{\kappa  = 1}^{\cal K} {{{\left( {{\lambda _\kappa } - s} \right)}^{ - m{q_\kappa }}}},
\end{align}
where $\lambda_1,\cdots,\lambda_{\mathcal K}$ define $\mathcal K$ distinct poles of ${M_{{Y_l}}}\left( s \right)$ with multiplicities $q_1,\cdots,q_{\mathcal K}$, respectively, and $\sum\limits_{\kappa  = 1}^{\mathcal K} {{q_\kappa }}  = l$. After some tedious manipulations, ${M_{{Y_l}}}\left( s \right)$ can be simplified as ${M_{{Y_l}}}\left( s \right) = \det {\left( {{\bf{I}} - s{{\bf{F}}^{1/2}{\bf{E}}{\bf{F}}^{1/2}}} \right)^{ - m}}$, where the notation $\det(\cdot)$ refers to the determinant, $\bf I$ represents an identity matrix, $\bf F$ is an $l\times l$ diagonal matrix with diagonal entries as $\{\Omega_\iota P_\iota/m\}_{\iota=1}^l$, and $\bf E$ is an $l\times l$ positive definite matrix given by
\begin{equation}\label{eqn:def_E}
{\bf{E}} = \left[ {\begin{array}{*{20}{c}}
1&{{\rho ^{2\delta  + 1}}}& \cdots &{{\rho ^{2\delta  + l - 1}}}\\
{{\rho ^{2\delta  + 1}}}&1& \cdots &{{\rho ^{2\delta  + l}}}\\
 \vdots & \vdots & \ddots & \vdots \\
{{\rho ^{2\delta  + l - 1}}}&{{\rho ^{2\delta  + l}}}& \cdots &1
\end{array}} \right].
\end{equation}
Since $1/\lambda_1,\cdots,1/\lambda_{\mathcal K}$ are the eigenvalues of the positive definite matrix ${{\bf{F}}^{1/2}{\bf{E}}{\bf{F}}^{1/2}}$, we have $\lambda_1,\cdots,\lambda_{\mathcal K} > 0$. By using inverse Laplace transform and its integration property \cite[Eq.9.109]{Oppenheim1996signals}, the CDF with respect to $Y_l$ is derived as
\begin{align}\label{eqn:inverse_lp_cdf}
{F_{{Y_l}}}\left( y \right) = \mathcal L^{-1}\left\{ {{M_{{Y_l}}}\left( { - s} \right)} \right\}\left( y \right) = \frac{1}{{2\pi j}}\int\limits_{a - j\infty }^{a + j\infty } {\frac{{{M_{{Y_l}}}\left( { - s} \right)}}{s}{e^{sy}}ds},
\end{align}
By using \cite[Eq. 5.2.21]{bateman1954tables}, (\ref{eqn:inverse_lp_cdf}) can be calculated as
\begin{multline}\label{eqn:inverse_lp_cal}
{F_{{Y_l}}}\left( y \right) = \frac{{{m^{ml}}\ell \left( l, \rho  \right)}}{{\prod\limits_{\iota  = 1}^l {{\Omega _\iota }^m{P_\iota }^m} }}\frac{1}{{2\pi j}}\int\limits_{a - j\infty }^{a + j\infty } {\frac{{{e^{sy}}}}{{s\prod\limits_{\kappa  = 1}^{\cal K} {{{\left( {s{\rm{ + }}{\lambda _\kappa }} \right)}^{m{q_\kappa }}}} }}ds}  \\
= 1 + \frac{{{m^{ml}}\ell \left( l,\rho  \right)}}{{\prod\limits_{\iota  = 1}^l {{\Omega _\iota }^m{P_\iota }^m} }}\sum\limits_{\kappa  = 1}^{\cal K} {\sum\limits_{\varsigma  = 1}^{m{q_\kappa }} {\frac{{{\Phi _{\kappa \varsigma }}\left( { - {\lambda _\kappa }} \right)}}{{\left( {m{q_\kappa } - \varsigma } \right)!\left( {\varsigma  - 1} \right)!}}{y^{m{q_\kappa } - \varsigma }}{e^{ - {\lambda _\kappa }y}}} },
\end{multline}
where ${\Phi _{\kappa \varsigma }}\left( s \right) = \frac{{{d^{\varsigma  - 1}}}}{{d{s^{\varsigma  - 1}}}}\left( {{s^{ - 1}}\prod\limits_{\tau  = 1,\tau  \ne \kappa }^{\cal K} {{{\left( {s{\rm{ + }}{\lambda _\tau }} \right)}^{ - m{q_\tau }}}} } \right)$. Therefore, by using (\ref{eqn:inverse_lp_cal}) and together with $y=2^R-1$, (\ref{eqn:inverse_lp_cal_sec}) in Theorem \ref{lem:harq_cc} is proved.

By using the expansion of Maclaurin series, (\ref{eqn:inverse_lp_cal}) can be further rewritten as
\begin{equation}\label{eqn:out_taylor}
{F_{{Y_l}}}\left( y \right) = \sum\limits_{n = 0}^\infty  {\frac{{{F_{{Y_l}}}^{\left( n \right)}\left( 0 \right)}}{{n!}}{y^n}}.
\end{equation}
Since
\begin{multline}\label{eqn:taylor_just}
{F_{{Y_l}}}^{\left( n \right)}\left( 0 \right) = \frac{{{m^{ml}}\ell \left( l,\rho  \right)}}{{\prod\limits_{\iota  = 1}^l {{\Omega _\iota }^m{P_\iota }^m} }}\frac{1}{{2\pi j}}\int\limits_{a - j\infty }^{a + j\infty } {\frac{{{s^{n - 1}}}}{{\prod\limits_{\kappa  = 1}^{\cal K} {{{\left( {s{\rm{ + }}{\lambda _\kappa }} \right)}^{m{q_\kappa }}}} }}ds}, \\
0 \le n \le ml,
\end{multline}
it follows by using the initial-value theorems of Laplace transform \cite[Eq. 9.5.10]{Oppenheim1996signals} that ${F_{{Y_l}}}^{\left( 1 \right)}\left( 0 \right)=\cdots={F_{{Y_l}}}^{\left( ml-1 \right)}\left( 0 \right)=0$, and ${F_{{Y_l}}}^{\left( ml \right)}\left( 0 \right) = \frac{{{m^{ml}}\ell \left( l,\rho  \right)}}{{\prod\nolimits_{\iota  = 1}^l {{\Omega _\iota }^m{P_\iota }^m} }}$. Moreover, it can be proved that ${F_{{Y_l}}}^{\left( n \right)}\left( 0 \right)$ is a higher order term of $\prod\nolimits_{\iota  = 1}^l {{P_\iota }^{-m}} $ for $n > ml$. The proof is omitted due to space limit. Thus it yields
\begin{equation}\label{eqn:higher_order_cc}
{F_{{Y_l}}}\left( y \right) = \frac{{{m^{ml}}\ell \left( l,\rho  \right){y^{ml}}}}{{\Gamma \left( {ml + 1} \right)\prod\limits_{\iota  = 1}^l {{\Omega _\iota }^m{P_\iota }^m} }} + o\left( {\prod\limits_{\iota  = 1}^l {{P_\iota }^{ - m}} } \right).
\end{equation}
Hence under high SNR, i.e., $P_\iota \to \infty$, (\ref{eqn:out_cc_asy}) can be derived by using (\ref{eqn:higher_order_cc}) together with $y=2^R-1$. Thus Theorem \ref{lem:harq_cc} is proved.

\section{Proof of Theorem \ref{the:power_opt}}\label{app:power_opt}
Clearly from (\ref{eqn:Pout_def}), since $P_l^* \ne 0$, we have ${\nu _l} = 0$. Therefore, after some algebraic manipulations, (\ref{eqn:kkt1}) can be rewritten as
\begin{multline}\label{eqn:kkt1_fur}
{\left. {\frac{{\partial \frak L}}{{\partial {P_n}}}} \right|_{\left( {{P_1^*}, \cdots ,{P_L^*},{\mu ^*}} \right)}} = \frac{{{\phi_{n - 1}}}}{{{{\left( {\prod\limits_{k = 1}^{n - 1} {P_k^*} } \right)}^m}}}  \\
- \frac{m}{{{P_n^*}}}\sum\limits_{l = n + 1}^L {{P_l}^*\frac{{{\phi_{l - 1}}}}{{{{\left( {\prod\limits_{k = 1}^{l - 1} {P_k^*} } \right)}^m}}}}  - \frac{m}{{{P_n^*}}}{\mu ^*}\frac{{{\phi_L}}}{{{{\left( {\prod\limits_{k = 1}^L {P_k^*} } \right)}^m}}} = 0.
\end{multline}
Together with ${\left. {\frac{{\partial \frak L}}{{\partial {P_{n-1}}}}} \right|_{\left( {{P_1^*}, \cdots ,{P_L^*},{\mu ^*}} \right)}}=0$, (\ref{eqn:kkt1_fur}) reduces to
\begin{equation}\label{eqn:kkt1_subs}
P_{n}^* = \left( {m + 1} \right){P_{n+1}^*}\frac{{{\phi_{n}}}}{{{\phi_{n - 1}}{{\left( {P_{n }^*} \right)}^m}}}.
\end{equation}
Now from (\ref{eqn:kkt1_subs}), we can derive $P_n^*$ recursively as (\ref{eqn:kkt1_subs_fin}).
Regarding to $P_L^*$, by letting $n=L$ in (\ref{eqn:kkt1_fur}), we have
\begin{equation}\label{eqn:kkt1_fur_fur}
\frac{{{\phi_{L - 1}}}}{{{{\left( {\prod\limits_{k = 1}^{L - 1} {P_k^*} } \right)}^m}}} = m{\mu ^*}\frac{{{\phi_L}}}{{{P_L^*}{{\left( {\prod\limits_{k = 1}^L {P_k^*} } \right)}^m}}} \Rightarrow {\mu ^*}{\rm{ = }}\frac{{{P_L^*}^{m + 1}{\phi_{L - 1}}}}{{m{\phi_L}}}.
\end{equation}
Recalling that $P_L \ne 0$, thus ${\mu ^*} \ne 0$. According to (\ref{eqn:kkt2}), we have
\begin{equation}\label{eqn:KKT condtion_outage}
\frac{{{\phi_L}}}{{{{\left( {\prod\limits_{k = 1}^L {P_k^*} } \right)}^m}}} - \varepsilon  = 0 \Rightarrow {\left( {\prod\limits_{n = 1}^L {{P_n^*}} } \right)^m} = \frac{{{\phi_L}}}{\varepsilon }.
\end{equation}
Substituting (\ref{eqn:kkt1_subs_fin}) into (\ref{eqn:KKT condtion_outage}) yields (\ref{eqn:P_L_star_sec}).
Moreover, by using (\ref{eqn:kkt1_subs}), it follows that
\begin{equation}\label{eqn:pn_recur_rela}
 {P_n^*} = \frac{{{{\left( {m + 1} \right)}^{L - n}}{\phi_{L - 1}}}}{{{\phi_{n - 1}}}}\frac{{{P_L^*}}}{{{{\left( {{P_n^*} \cdots P_{L - 1}^*} \right)}^m}}}.
\end{equation}

\bibliographystyle{ieeetran}
\bibliography{manuscript_1}

\end{document}